\newcolumntype{R}[2]{%
    >{\adjustbox{angle=#1,lap=\width-(#2)}\bgroup}%
    l%
    <{\egroup}%
}
\newcommand*\rot{\multicolumn{1}{R{45}{1em}}}
\renewcommand{\P}{\mathrm{P}}
\renewcommand{\b}{b}	
\newcommand{\p}{p}	
\newcommand{\f}{\mathrm{f}}
\newcommand{\I}{\mathrm{I}}
\newtheorem{theorem}{Theorem}
\begin{document}
%
\title{Reliability-output Decoding of \\ Tail-biting Convolutional Codes}

%

\author{Adam~R.~Williamson,~\IEEEmembership{Student Member,~IEEE,}
        Matthew~J.~Marshall,
        and~Richard~D.~Wesel,~\IEEEmembership{Senior~Member,~IEEE}
\thanks{A. R. Williamson, M. J. Marshall, and R. D. Wesel are with the Electrical Engineering Department, University of California, Los Angeles, CA 90095 USA (e-mail: adamroyce@ucla.edu; mmarshall@ucla.edu; wesel@ee.ucla.edu).}
\thanks{This material is based upon work supported by the National Science Foundation under Grant Number 1162501. Any opinions, findings, and conclusions or recommendations expressed in this material are those of the author(s) and do not necessarily reflect the views of the National Science Foundation.}}

%
%

\maketitle


%
%

\begin{abstract}
We present extensions to Raghavan and Baum's reliability-output Viterbi algorithm (ROVA) to accommodate tail-biting convolutional codes. These tail-biting reliability-output algorithms compute the exact word-error probability of the decoded codeword after first calculating the posterior probability of the decoded tail-biting codeword's starting state. One approach employs a state-estimation algorithm that selects the maximum a posteriori state based on the posterior distribution of the starting states.
Another approach is an approximation to the exact tail-biting ROVA that estimates the word-error probability. A comparison of the computational complexity of each approach is discussed in detail.
The presented reliability-output algorithms apply to both feedforward and feedback tail-biting convolutional encoders.
These tail-biting reliability-output algorithms are suitable for use in reliability-based retransmission schemes with short blocklengths, in which terminated convolutional codes would introduce rate loss. 
\end{abstract}


\IEEEpeerreviewmaketitle

%
%
\section{Introduction}
\label{sec:intro}

Raghavan and Baum's reliability-output Viterbi algorithm (ROVA) \cite{Raghavan_ROVA_TransIT_1998} uses the sequence-estimation property of the Viterbi algorithm to calculate the exact word-error probability of a received convolutional code sequence. In general, the ROVA can be used to compute the word-error probability for any finite-state Markov process observed via memoryless channels (i.e., processes with a trellis structure). However, the ROVA is only valid for processes that terminate in a known state (usually  the all-zeros state). For codes with large constraint lengths $(\nu + 1)$, a significant rate penalty is incurred due to the $\nu$ additional symbols that must be transmitted in order to arrive at the termination state.

Tail-biting convolutional codes can start in any state, but must terminate in the same state.  The starting/terminating state is unknown at the receiver.  These codes do not suffer the rate loss of terminated codes, making them throughput-efficient (see, e.g., \cite{Ma_Wolf_TB_codes_TCOM_1986} and \cite[Ch. 12.7]{Lin_2004_ECC}). The tail-biting technique is commonly used for short-blocklength coding.

%
\subsection{Overview and Contributions}

In this paper, we extend the ROVA to compute the word-error probability for tail-biting codes. First, we present a straightforward approach, which we call the tail-biting ROVA ({\tt TB ROVA}). ~{\tt TB ROVA} invokes the original ROVA for each of the possible starting states $s$.  The complexity of this straightforward approach is large, proportional to  $2^{2\nu}$ for standard binary convolutional codes (and $q^{2\nu}$ for convolutional codes over Galois field $GF(q)$).   

We explore several approaches to reduce the complexity of {\tt TB ROVA}. We first introduce a post-decoding algorithm that computes the reliability of codewords that have already been decoded by an existing tail-biting decoder, including possibly suboptimal decoders. We then propose a new tail-biting decoder that uses the posterior distribution of the starting states to identify the most probable starting state of the received sequence.
Finally, we discuss how to use Fricke and Hoeher's simplified (approximate) ROVA \cite{Fricke_Approx_ROVA_VTC_2007} for each of the $q^\nu$ initial states, which reduces the complexity of the word-error probability computation.

The reliability-output algorithms presented in this paper apply to both feedforward (non-recursive) and feedback (recursive) convolutional encoders. However, as pointed out by St\r{a}hl et al. \cite{Stahl_Feedback_TB_TransIT_2002}, it is not possible to have a one-to-one mapping from information words to codewords and still fulfill the tail-biting restriction for feedback encoders at certain tail-biting codeword lengths. St\r{a}hl et al. \cite{Stahl_Feedback_TB_TransIT_2002} provide conditions for when tail-biting will work for recursive encoders and also describe how to determine the starting state corresponding to an input sequence. In the cases that the tail-biting technique works for feedback encoders, there is a one-to-one mapping between input sequences and codewords, and the reliability-output algorithms in this paper are valid.

The remainder of this paper proceeds as follows: Sec. \ref{sec:relatedliterature} reviews the related literature and Sec. \ref{sec:notation} introduces notation.  Sec.~\ref{sec:ROVA} reviews Raghavan and Baum's ROVA and discusses how to extend it to tail-biting codes. 
The simplified ROVA for tail-biting codes is discussed in Sec.~\ref{sec:simplified}. 
Sec.~\ref{sec:post-proc_rova} presents the Post-decoding Reliability Computation ({\tt PRC}) for tail-biting codes and Sec.~\ref{sec:state_estimation} introduces the Tail-Biting State-Estimation Algorithm ({\tt TB SEA}).
Sec.~\ref{sec:tbbcjr} discusses an alternative to {\tt TB SEA} using the tail-biting BCJR algorithm.
Sec.~\ref{sec:complexity} evaluates the complexity of the proposed algorithms, and
Sec.~\ref{sec:simulation_results} shows numerical examples of the computed word-error probability and the actual word-error probability.
Sec. \ref{sec:conc} concludes the paper.

%
\subsection{Related Literature}
\label{sec:relatedliterature}

There are a number of reliability-based decoders for terminated convolutional codes, most notably the Yamamoto-Itoh algorithm \cite{Yamamoto_Itoh_Algo_TransIT_1980}, which computes a reliability measure for the decoded word, but not the exact word-error probability. In \cite{Fricke_Reliability_HARQ_TCOM_2009}, Fricke and Hoeher use Raghavan and Baum's ROVA in a reliability-based type-I hybrid Automatic Repeat reQuest (ARQ) scheme.

Hof et al. \cite{Hof_Conv_Bounds_ISCTA_2009} modify the Viterbi algorithm to permit generalized decoding according to Forney's generalized decoding rule \cite{Forney_Erasure_TransIT_1968}.
When the generalized decoding threshold is chosen for maximum likelihood (ML) decoding with erasures and the erasure threshold is chosen appropriately, this augmented Viterbi decoder is equivalent to the ROVA.

A type-II hybrid ARQ scheme for incremental redundancy using punctured terminated convolutional codes is presented by Williamson et al. \cite{Williamson_ROVA_ISIT_2013}. In \cite{Williamson_ROVA_ISIT_2013}, additional coded symbols are requested when the word-error probability computed by the ROVA exceeds a target word-error probability. This word-error requirement facilitates comparisons with recent work in the information theory community \cite{Polyanskiy_CCR_2010,Polyanskiy_IT_2011_NonAsym}. Polyanskiy et al. \cite{Polyanskiy_IT_2011_NonAsym} investigate the maximum rate achievable at short blocklengths with 
variable-length feedback codes. While \cite{Williamson_ROVA_ISIT_2013} shows that terminated convolutional codes can deliver throughput above the random-coding lower bound of \cite{Polyanskiy_IT_2011_NonAsym}, the rate loss from termination is still significant at short blocklengths. To avoid the termination overhead, it is imperative to have a reliability-output decoding algorithm for tail-biting codes.

In contrast to the decoding algorithms for terminated codes, Anderson and Hladik \cite{Anderson_TB_MAP_JSAC_1998} present a tail-biting maximum a posteriori (MAP) decoding algorithm. This extension of the BCJR algorithm \cite{BCJR_TransIT_1974} can be applied to tail-biting codes with a priori unequal source data probabilities. As with the BCJR algorithm,  \cite{Anderson_TB_MAP_JSAC_1998} computes the posterior probabilities of individual data symbols.  In contrast, the ROVA \cite{Raghavan_ROVA_TransIT_1998} and the tail-biting reliability-based decoders in this paper compute the posterior probabilities of the codeword. 

More importantly, the tail-biting BCJR of \cite{Anderson_TB_MAP_JSAC_1998} is only an approximate symbol-by-symbol MAP decoder, as pointed out in \cite{Anderson_TB_BCJR_book_2001} and \cite{Anderson_BD_CVA_TCOM_2002}. Because the tail-biting restriction is not strictly enforced, non-tail-biting ``pseudocodewords" can cause bit errors, especially when the ratio of the tail-biting length $L$ to the memory length $\nu$ is small (i.e., $L/\nu \approx 1$-$2$). Further comparisons with the tail-biting BCJR are given in Sec.~\ref{sec:tbbcjr}. An exact symbol-by-symbol MAP decoder for tail-biting codes is given in \cite[Ch. 7]{Johannesson_Fundamentals_Conv_1999}.

Handlery et al. \cite{Handlery_Boosting_TCOM_2003} introduce a suboptimal, two-phase decoding scheme for tail-biting codes that computes the approximate posterior probabilities of each starting state and then uses the standard BCJR algorithm to compute the posterior probabilities of the source symbols. This approach is compared to the tail-biting BCJR of \cite{Anderson_TB_MAP_JSAC_1998} and exact MAP decoding in terms of bit-error-rate (BER) performance.  Both the two-phase approach of \cite{Handlery_Boosting_TCOM_2003} and the tail-biting BCJR of \cite{Anderson_TB_MAP_JSAC_1998} perform close to exact MAP decoding when $L/\nu$ is large, but suffer a BER performance loss when $L/\nu$ is small.

Although it does not compute the word-error probability, Yu \cite{Yu_State_Est_VTC_2008} introduces a method of estimating the initial state of tail-biting codes, which consists of computing a pre-metric for each state based on the last $\nu$ observations of the received word. This pre-metric is then used to initialize the path metrics of the main tail-biting decoder (e.g., the circular Viterbi decoder \cite{Cox_Circular_Viterbi_TransVT_1994}), instead of assuming that all states are equally likely at initialization. The state-estimation method of \cite{Yu_State_Est_VTC_2008}, which is not maximum-likelihood, is limited to systematic codes and a special configuration of nonsystematic codes that allows information symbols to be recovered from noisy observations of coded symbols.

Because tail-biting codes can be viewed as circular processes \cite{Ma_Wolf_TB_codes_TCOM_1986,Cox_Circular_Viterbi_TransVT_1994}, decoding can start at any symbol.  Wu et al. \cite{Wu_RB_TB_VTC_2010} describe a reliability-based decoding method that  compares the log likelihood-ratios of the received symbols in order to determine the most reliable starting-location for tail-biting decoders.  
Selecting a reliability-based starting location reduces the error probability by minimizing the chance of choosing non-tail-biting paths early in the decoding process. Wu et al. \cite{Wu_RB_TB_VTC_2010} apply this approach to existing suboptimal decoders, including the wrap-around Viterbi algorithm of \cite{Shao_TwoTB_TCOM_2003}.
As with \cite{Yu_State_Est_VTC_2008}, \cite{Wu_RB_TB_VTC_2010} does not compute the word-error probability.

Pai et al. \cite{Pai_HARQ_TBCC_TransVT_2011} generalizes the Yamamoto-Itoh algorithm to handle tail-biting codes and uses the computed reliability measure as the retransmission criteria for hybrid ARQ. When there is a strict constraint on the word-error probability, however, this type of reliability measure is not sufficient to guarantee a particular undetected-error probability.
Providing such a guarantee motivates the word-error probability calculations in this paper (instead of bit-error probability as in  \cite{Anderson_TB_MAP_JSAC_1998,Yu_State_Est_VTC_2008,Handlery_Boosting_TCOM_2003,Johannesson_Fundamentals_Conv_1999,Anderson_TB_BCJR_book_2001,Anderson_BD_CVA_TCOM_2002}).

%
\subsection{Notation}
\label{sec:notation}

We use the following notation in this paper: $\P(X = x)$ denotes the probability mass function (p.m.f.) of discrete-valued random variable $X$ at value $x$, which we also write as $\P(x)$. The probability density function (p.d.f.) of a continuous-valued random variable $Y$ at value $y$ is $\f(Y=y)$, sometimes written as $\f(y)$. In general, capital letters denote random variables and lowercase letters denote their realizations.
Boldface letters with superscripts denote vectors, as in ${\bf y}^\ell = (y_1, y_2, \dots, y_\ell)$, while subscripts denote a particular element of a vector: $y_i$ is the $i$th element of ${\bf y}^\ell$. 
We use the hat symbol to denote the output of a decoder, e.g.,  $\hat {\bf x}$ is the codeword chosen by the Viterbi algorithm.

%
%
\section{The Reliability-output Viterbi Algorithm}
\label{sec:ROVA}

Raghavan and Baum's reliability-output Viterbi algorithm \cite{Raghavan_ROVA_TransIT_1998} augments the canonical Viterbi decoder with the computation of the word-error probability of the maximum-likelihood (ML) codeword. In this section, we provide an overview of the ROVA.

For rate-$k/n$ convolutional codes with $L$ trellis segments and input alphabet $q$, we denote the ML codeword as $\hat {\bf x}^L = \hat {\bf x}$ and the noisy received sequence as ${\bf y}^L = {\bf y}$. The probability that the ML decision is correct given the received word is $\P(\bf X = \hat x | Y = y) = \P(\hat x | y )$, and the word-error probability is $\P({\bf X \neq \hat x | Y = y}) = 1 - \P({\bf \hat x | y })$. The probability of successfully decoding can be expressed as follows:
\begin{align}
	\P(\bf \hat x |y) &= \frac{\f(\bf y|\hat x) \P(\hat x)}{\f(\bf y)} 
	= \frac{\f(\bf y|\hat x) \P(\hat x)}{\sum \limits_{\bf x'} \f(\bf y|x') \P(x')},
\end{align}
where we have used $\f(\bf y|\hat x)$ to denote the conditional p.d.f. of the real-output channel (e.g., the binary-input AWGN channel). This may be replaced by the conditional p.m.f. $\P(\bf y|\hat x) $ for discrete-output channels (e.g., the binary symmetric channel).

The probability of correctly decoding can be further simplified if each of the codewords $\bf x'$ is a priori equally likely, i.e., $\P(\bf \hat x) = \P(x') ~\forall~ x' \neq \hat x$, which we assume for the remainder of the paper. This assumption yields
\begin{align}
	\P( \bf \hat x|y) &= \frac{\f(\bf y|\hat x) }{\sum \limits_{\bf x'} \f(\bf y|x') }.
	\label{eqn:P_x_y}
\end{align}
In general, the denominator in \eqref{eqn:P_x_y} may be computationally intractable when the message set cardinality is large. 
However, the ROVA \cite{Raghavan_ROVA_TransIT_1998} takes advantage of the trellis structure of convolutional codes to compute $\P(\bf \hat x|y)$ exactly with complexity that is linear in the blocklength and exponential in the constraint length of the code (i.e., it has complexity on the same order as that of the original Viterbi algorithm). This probability can also be computed approximately by the simplified (approximate) ROVA \cite{Fricke_Approx_ROVA_VTC_2007}, which will be discussed further in Sec.~\ref{sec:simplified}.

The ROVA can compute the probability of word error for any finite-state Markov process observed via memoryless channels (e.g., in maximum-likelihood sequence estimation for signal processing applications). In the remainder of this paper, we use the example of convolutional encoding and decoding, but the ROVA and our tail-biting trellis algorithms apply to any finite-state Markov process.


%
\subsection{Conditioning on the Initial State}

Raghavan and Baum's ROVA applies only to codes that begin and end at a known state.  Each of the probabilities $\f(\bf y|x')$ in \eqref{eqn:P_x_y} is implicitly conditioned on the event that the receiver knows the initial and final state of the convolutional encoder. 

To be precise, ROVA beginning and ending at the same state $s$, which we shall denote as {\tt ROVA}$(s)$, effectively computes the following:
\begin{align}
	\underbrace{ \P({\bf  \hat x}_s |{\bf y},s)}_\text{computed by {\tt ROVA}$(s)$} &= \frac{\f({\bf y}|{\bf \hat x}_s,s) \P({\bf \hat x}_s|s) }{\f({\bf y}|s) } &= \frac{\f({\bf y}|{\bf \hat x}_s,s) }{\sum \limits_{{\bf x}'_s} \f({\bf y}|{\bf x}_s',s) },
	\label{eqn:P_x_y_s}
\end{align}
where the limit ${\bf x}'_s$ in the summation of the denominator denotes that the enumeration for the summation is over all codewords ${\bf x}'_s$ with starting state $s$, and $\f({\bf y|x}_s',s)$ is shorthand for $\f({\bf Y=y|X=x}_s',S=s)$. 
In summary, {\tt ROVA}$(s)$ computes the ML codeword $\hat {\bf x}_s$ corresponding to starting state $s$, the posterior probability of that codeword given $s$, $\P(\hat {\bf x}_s | {\bf y}, s)$, and the probability of the received sequence given $s$, $\f({\bf y} | s)$. The inputs and outputs of {\tt ROVA}$(s)$ are illustrated in the block diagram of Fig.~\ref{fig:rova_block}.
%

\begin{figure}[htb]
\centering \def\svgwidth{290pt}
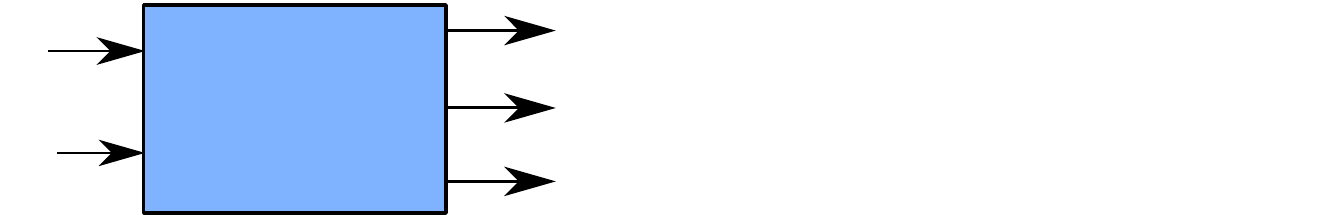
\caption{Block diagram of Raghavan and Baum's {\tt ROVA}$(s)$ \cite{Raghavan_ROVA_TransIT_1998}.}
\label{fig:rova_block}
\end{figure}

For tail-biting codes, we are interested in computing the quantity $\P(\bf \hat x|y)$ without conditioning on the unknown starting and ending state $s$:
\begin{align}
	\P(\bf \hat x|y) &= \sum \limits_{s} \P({\bf \hat x|y},s) \P(s|{\bf y}).
\end{align}
%
%
%
The ML codeword $\bf \hat x$ has an associated initial state, $\hat s$. Note that $\P({\bf \hat x|y},s) = 0$ unless $s=\hat s$, since $\bf \hat x$ is not a possible codeword for any starting state other than $\hat s$. Thus, we have:
\begin{align}
	\P(\bf \hat x|y) &= \P({\bf \hat x|y},\hat s) \P(\hat s| \bf y).
	\label{eqn:P_x_y_weighted}
\end{align}
Thus, the tail-biting ROVA ({\tt TB ROVA}) must compute the probability $\P(\bf \hat x|y)$ of successful decoding in \eqref{eqn:P_x_y_weighted} by weighting $ \P({\bf \hat x|y},\hat s) $ with $\P(\hat s| \bf y)$. (For the original ROVA with a known starting state $\hat s$, $\P(\hat s| {\bf y}) = 1$ and $\P(s'| {\bf y}) = 0 ~\forall ~s' \neq \hat s$.)

Using the fact that each of the initial states $s$ is equally likely {\it a priori} (i.e., $\P(s) = \P(s') ~\forall s \neq s'$), we have:
\begin{align}
	\P(\hat s| \bf y) &= \frac{\f({\bf y}|\hat s)}{ \sum \limits_{s'}  \f({\bf y}|s') }.
	\label{eqn:P_s_y}
\end{align}
This finally yields
\begin{align}
	\underbrace{\P(\bf \hat x|y)}_\text{computed by {\tt TB ROVA}} &= \frac{ \overbrace{\P({\bf \hat x|y},\hat s) \f({\bf y}|\hat s)}^\text{computed by {\tt ROVA$(\hat s)$}} }{ \sum \limits_{s'}  \underbrace{\f({\bf y}|s')}_\text{computed by {\tt ROVA$(s')$}} },
	\label{eqn:P_x_y_TB}
\end{align}
where the summation in the denominator of \eqref{eqn:P_x_y_TB} is over all $q^\nu$ possible initial states.

%
\subsection{A Straighforward Tail-biting ROVA}
\label{sec:straightforward_algo}

A straightforward ML approach to decoding tail-biting codes is to perform the Viterbi algorithm {\tt VA}$(s)$, for each possible starting state $s = 0, 1, \dots, q^{\nu}-1$. The ML codeword $\bf \hat x$ is then chosen by determining the starting state with the greatest path metric (i.e., the greatest probability). As shown in Fig.~\ref{fig:tb_rova_block}, this approach will work for the ROVA as well: perform {\tt ROVA}$(s)$ for each possible $s$ and then pick $\bf \hat x$ and its starting state $\hat s$. The probability $\P(\bf \hat x|y)$ is then computed as in \eqref{eqn:P_x_y_TB}, using $\P({\bf \hat x|y},\hat s)$ from the ROVA for the ML starting state and the $\f({\bf y}|s)$ terms produced by the ROVAs for all the states. This approach is illustrated in the block diagram of Fig.~\ref{fig:tb_rova_block}.
\begin{figure}[t]
\small
\centering 
\def\svgwidth{240pt}
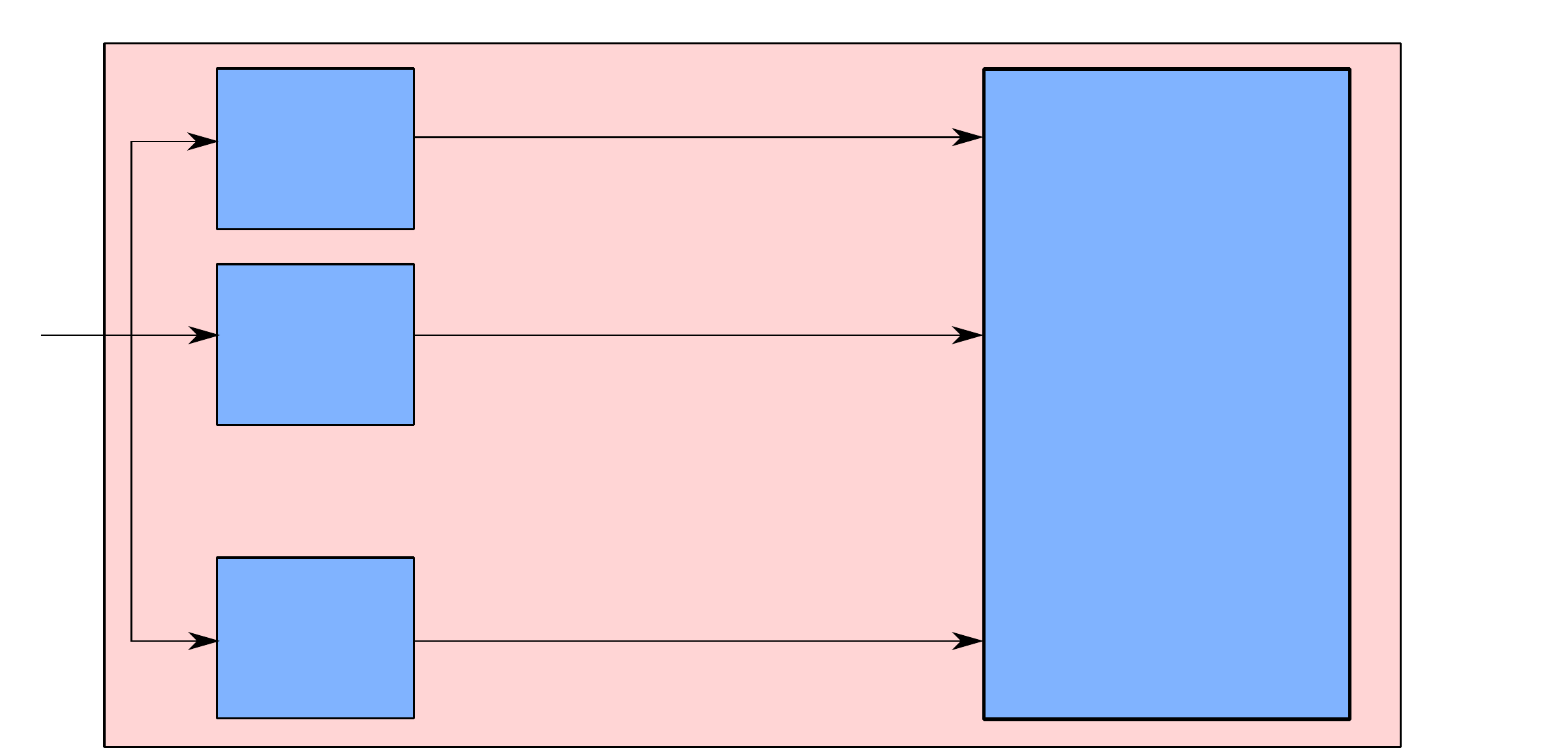
\caption{Block diagram of the straightforward tail-biting ROVA ({\tt TB ROVA}), which performs the {\tt ROVA($s$)} for each possible starting state $s$. The largest possible state is $t = q^\nu - 1$.}
\label{fig:tb_rova_block}
\end{figure}

%
%
\section{The Simplified ROVA for Tail-biting Codes}
\label{sec:simplified}

This section proposes replacing the exact word-error computations of Sec.~\ref{sec:straightforward_algo}'s straightforward {\tt TB ROVA} with an estimated word-error probability,  using Fricke and Hoeher's simplified (approximate) ROVA \cite{Fricke_Approx_ROVA_VTC_2007}. This approach requires a Viterbi decoder for each starting state to select the ML codeword for that state.
Fricke and Hoeher's \cite{Fricke_Approx_ROVA_VTC_2007} simplified ROVA for starting state $s$, which we call {\tt Approx ROVA$(s)$}, estimates the probability $\P({\bf x}_s|{\bf y},s)$. Substituting this estimate $\tilde \P({\bf x}_s|{\bf y},s)$ into \eqref{eqn:P_x_y_TB}, we have the following approximation:
\begin{align}
	\tilde \P(\bf \hat x|y) &= \frac{\overbrace{\tilde \P({\bf \hat x|y}, \hat s)}^\text{computed by {\tt Approx ROVA}$(\hat s)$} \f({\bf y}| \hat s)}{ \sum \limits_{s}  \f({\bf y}|s) } \label{eqn:ptilde}.
\end{align}
While $\f({\bf y}|s)$ is not computed directly by {\tt Approx ROVA}$(s)$, we can approximate it with quantities available from {\tt Approx ROVA}$(s)$ as
\begin{align}
	\f({\bf y}|s) \approx \tilde \f({\bf y}|s)&=\frac{ \f({\bf y|x}_s,s) \P({\bf x}_s| s) }{ \tilde \P({\bf x}_s | {\bf y}, s)} \\
	&= \frac{ \f({\bf y|x}_s, s) }{ q^{kK} \mathrm{\tilde P}({\bf x}_s | {\bf y}, s)} \label{eqn:ftilde},
\end{align}
when all $q^{kK}$ codewords with starting state $s$ are equally likely, where $K = L - \nu$ is the number of trellis segments before termination. Note $\f({\bf y|x}_s,s)$ can be calculated by the Viterbi algorithm for starting state $s$. 

Equations \eqref{eqn:ptilde} and \eqref{eqn:ftilde} lead to the following estimate of the word-correct probability:
\begin{align}
	{\tilde \P(\bf \hat x|y)} &\approx \frac{ \overbrace{\tilde \P({\bf \hat x|y},\hat s) \tilde \f({\bf y}|\hat s)}^\text{computed by {\tt Approx ROVA}$(\hat s)$} } { \sum \limits_s \underbrace{ \tilde \f({\bf y}|s)}_\text{computed by {\tt Approx ROVA}$(s)$ }}.	\label{eqn:ptilde_approx}
\end{align}
We refer to the overall computation of  ${\tilde \P(\bf \hat x|y)}$ in \eqref{eqn:ptilde_approx} as {\tt Approx TB ROVA}. Sec.~\ref{sec:complexity} provides a discussion of its complexity and Sec.~\ref{sec:simulation_results} presents simulation results.

Note that despite the approximations, the simplified ROVA chooses the ML codeword for terminated codes. For the tail-biting version {\tt Approx TB ROVA}, as long as the winning path metric of each starting/ending state is used to determine the ML state $\hat s$, the decoded codeword will also be ML (and the same as the codeword chosen by the exact tail-biting ROVA in Sec.~\ref{sec:straightforward_algo}). However, if the approximate reliabilities $\tilde \P({\bf x}_s| {\bf y}, s)$ are used instead of the path metrics to select the decoded word $\bf \hat x$ as $\bf \hat x = \arg \max \limits_{s} \tilde \P({\bf x}_s| {\bf y}, s)$, it is possible that the decoded word will not be ML (if the channel is noisy enough).

%
%
\section{Post-Decoding Reliability Computation}
\label{sec:post-proc_rova}

There are $q^\nu$ possible starting states that must be evaluated in the straightforward {\tt TB ROVA} of Sec.~\ref{sec:straightforward_algo} and Fig.~\ref{fig:tb_rova_block}.  Thus it may be beneficial to instead use an existing reduced-complexity tail-biting decoder to find $\bf \hat x$,  and then compute the reliability separately.  Many reduced-complexity tail-biting decoders take advantage of the circular decoding property of tail-biting codes.  Some of these approaches are not maximum likelihood, such as the wrap-around Viterbi algorithm or Bidirectional Viterbi Algorithm (BVA), both discussed in \cite{Shao_TwoTB_TCOM_2003}. The A* algorithm \cite{Shankar_Astar_ISIT_2001,Shankar_Astar_2006,Ortin_Astar_2010} is one ML alternative to the tail-biting decoding method described in Sec. \ref{sec:straightforward_algo}.  Its complexity depends on the SNR.

Suppose that a decoder has already been used to determine $\bf \hat x$ and its starting state $\hat s$, and that we would  like to determine $\P(\bf \hat x|y)$. One operation of {\tt ROVA}$(\hat s)$ would compute the probability $\P({\bf \hat x|y},\hat s)$, but the probability $\P(\hat s|{\bf y})$ required by \eqref{eqn:P_x_y_weighted} would still be undetermined. Must we perform {\tt ROVA}$(s)$ for each $s \neq \hat s$ in order to compute $\P(\hat s | \bf y)$ as in \eqref{eqn:P_s_y}? This section shows how to avoid this by combining the computations of the straightforward approach into a novel Post-decoding Reliability Computation ({\tt PRC}) for tail-biting codes.

Fig.~\ref{fig:post_block} shows a block diagram of {\tt PRC}.  For a rate-$k/n$ tail-biting convolutional code with $\nu$ memory elements, {\tt PRC} takes the following inputs: a received sequence ${\bf y}^L$ with $L$ trellis segments, a \textit{candidate codeword} ${\bf \hat x}^L$ corresponding to a \textit{candidate path} in the trellis, and the starting/ending state $\hat s$ of the candidate codeword. The goal is to compute the posterior probability of the candidate codeword, $\P({\bf \hat x}^L | {\bf y}^L)$. The candidate codeword selected by the decoder may not be the ML codeword.  ~{\tt PRC} computes its true reliability regardless.

\begin{figure}[t]
\centering \def\svgwidth{290pt}
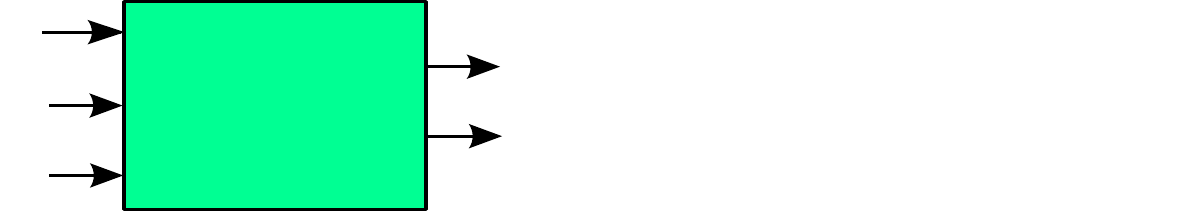
\caption{Block diagram of the Post-decoding Reliability Computation ({\tt PRC}).}
\label{fig:post_block}
\end{figure} 

Raghavan and Baum's ROVA \cite{Raghavan_ROVA_TransIT_1998} performs the traditional add-compare-select operations of the Viterbi algorithm and then computes, for every state in each trellis segment, the posterior probability that the survivor path is correct and the posterior probability that one of the non-surviving paths at the state is correct. Upon reaching the end of the trellis (the $L$th segment), having selected survivor paths at each state, there will be one survivor path corresponding to the ML codeword.

In contrast, with the candidate path already identified, {\tt PRC} processes the trellis without explicitly selecting survivors.  {\tt PRC} computes the reliability of the candidate path and the overall reliability of all other paths.

\subsection{{\tt PRC} Overview}

We define the following events at trellis stage $\ell$ ($\ell \in \{1, 2, \dots, L\}$):
\begin{itemize}
	\item $\p_\ell^{\hat s \rightarrow j} = \{$the candidate path from its beginning at state $\hat s$ to its arrival at state $j$ in segment $\ell$ is correct$\}$
	\item $\bar \p_\ell^{s \rightarrow r} = \{$some path from its beginning at state $s$ to its arrival at state $r$ in segment $\ell$ is correct (including possibly the candidate path if $\hat s=s$) $\}$
	\item $\b_\ell^{i \rightarrow j} = \{$the branch from state $i$ to state $j$ at time $\ell$ is correct$\}$
\end{itemize}

For $\nu = 3$ memory elements, Fig.~\ref{fig:trellis_diagram} gives some examples of the paths corresponding to each of these events.   The black branches in Fig.~\ref{fig:trellis_diagram} constitute all of the paths in the event  $\bar \p_7^{s \rightarrow r}$.   The red branches in Fig.~\ref{fig:trellis_diagram} show the candidate path corresponding to the event $\p_7^{\hat s \rightarrow j}$.  The posterior probability that the red candidate path starting at state $\hat s$ is correct is $\P(\p_7^{\hat s \rightarrow j} | {\bf y}^7)$. The posterior probability that any of the paths that started at state $s$ and arrive at state $r$ in segment $7$ are correct is $\P(\bar \p_7^{r \rightarrow s} | {\bf y}^7)$.
Note that since some branch transitions are invalid in the trellis, $\P(\p_\ell^{\hat s \rightarrow j})$ and $\P(\b_\ell^{i \rightarrow j})$ may be zero for invalid states and branches in segment $\ell$.

The path-correct probabilities $\P(\p_\ell^{\hat s \rightarrow j})$ and $\P(\bar \p_\ell^{r \rightarrow s})$ can be expressed recursively in terms of the probabilities of the previous trellis segments' paths being correct. Conditioned on the noisy channel observations ${\bf y}^\ell = (y_1, y_2, \dots, y_\ell)$, the path-correct probability for the candidate path (which passes through state $i$ in segment $\ell - 1$) is 
\begin{align}
	\P(\p_\ell^{\hat s \rightarrow j} | {\bf y}^\ell) &= \P(\p_{\ell-1}^{\hat s \rightarrow i}, \b_\ell^{i \rightarrow j} | {\bf y}^\ell) & \\
	&= \P(\b_\ell^{i \rightarrow j} | {\bf y}^\ell, \p_{\ell-1}^{\hat s \rightarrow i}) \P(\p_{\ell-1}^{\hat s \rightarrow i} | {\bf y}^\ell) \\
	&= \frac{\mathrm{f}(y_\ell | {\bf y}^{\ell-1}, \p_{\ell-1}^{\hat s \rightarrow i}, \b_\ell^{i \rightarrow j})} {\mathrm{f}(y_\ell | {\bf y}^{\ell-1})} \label{eqn:bayes} \\
 & \times \P(\b_\ell^{i \rightarrow j} | {\bf y}^{\ell-1}, \p_{\ell-1}^{\hat s \rightarrow i})   \P(\p_{\ell-1}^{\hat s \rightarrow i} | {\bf y}^{\ell-1}).   \nonumber
\end{align}
The decomposition in \eqref{eqn:bayes} uses Bayes' rule and follows \cite{Raghavan_ROVA_TransIT_1998}.  Fig.~\ref{fig:trellis_diagram} identifies an example of states $i$ and $j$ used to compute the probability of $\b_7^{i \rightarrow j}$. 

\begin{figure}
  \centering
    	\scalebox{0.56}{\includegraphics{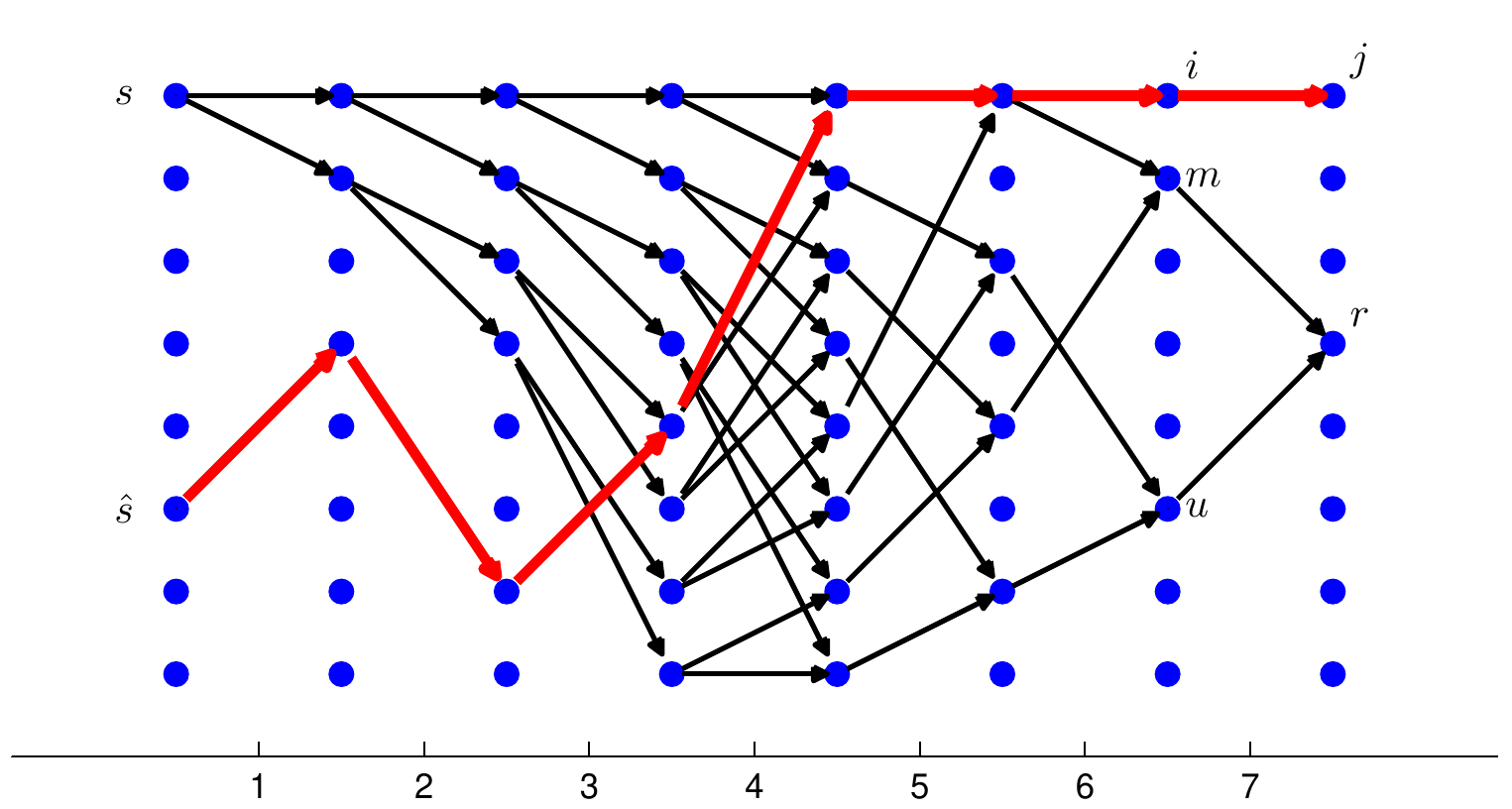}}
\caption{An example of a trellis for a rate-$1/n$ code with $\nu = 3$ memory elements is shown for $\ell = 1, 2, \dots, 7$.  The red branches show the candidate path from its beginning at state $\hat s$ to its arrival at state $j$.  The black branches show all of the paths originating at state $s$ from their beginning to their arrival at state $r$. Note that the figure does not show the final stage of the tail-biting trellis where all paths must return to their starting states.}
\label{fig:trellis_diagram}
\end{figure}

By the Markov property,  $\mathrm{f}(y_\ell | {\bf y}^{\ell-1}, \p_{\ell-1}^{\hat s \rightarrow i}, \b_\ell^{i \rightarrow j}) = \mathrm{f}(y_\ell | \b_\ell^{i \rightarrow j})$, which is the conditional p.d.f., related to the familiar Viterbi algorithm branch metric. Similarly, the second term is $\P(\b_\ell^{i \rightarrow j} | {\bf y}^{\ell-1}, \p_{\ell-1}^{\hat s \rightarrow i}) = \P(\b_\ell^{i \rightarrow j} | \p_{\ell-1}^{\hat s \rightarrow i})$. With these simplifications, \eqref{eqn:bayes} becomes
\begin{align}
	\P(\p_\ell^{\hat s \rightarrow j} | {\bf y}^\ell) &= \frac{\mathrm{f}(y_\ell | \b_\ell^{i \rightarrow j}) \P(\b_\ell^{i \rightarrow j} | \p_{\ell-1}^{\hat s \rightarrow i}) \P(\p_{\ell-1}^{\hat s \rightarrow i} | {\bf y}^{\ell-1})}{\mathrm{f}(y_\ell | {\bf y}^{\ell-1})} .	\label{eqn:P_surv}
\end{align}
%
The denominator can be expressed as a sum over all branches in the trellis $\cal{T}_\ell$ at time $\ell$, where each branch from $m$ to $r$ is denoted by a pair $(m,r) \in \cal{T}_\ell$:
\begin{align}
	\mathrm{f}(y_\ell | {\bf y}^{\ell-1}) &= \sum \limits_{(m,r) \in \cal{T_\ell}} \mathrm{f}(y_\ell | {\bf y}^{\ell-1}, \b_\ell^{m \rightarrow r}) \P(\b_\ell^{m \rightarrow r} | {\bf y}^{\ell-1}) \\
	&= \sum \limits_{(m,r) \in \cal{T_\ell}} \mathrm{f}(y_\ell | \b_\ell^{m \rightarrow r}) \P(\b_\ell^{m \rightarrow r} | {\bf y}^{\ell-1}).
	\label{eqn:f_denom}
\end{align}
The derivation thus far has followed \cite{Raghavan_ROVA_TransIT_1998}, which focused on terminated convolutional codes.

For tail-biting codes, we can further expand the term $\P(\b_\ell^{m \rightarrow r} | {\bf y}^{\ell-1})$ by summing over all the possible starting states $s'$ as follows:
\begin{align}
	\P(\b_\ell^{m \rightarrow r} | {\bf y}^{\ell-1}) &= \sum \limits_{s'} \P(\b_\ell^{m \rightarrow r}, \bar \p_{\ell-1}^{s' \rightarrow m} | {\bf y}^{\ell-1}) \\
	&= \sum \limits_{s'} \P(\b_\ell^{m \rightarrow r} | {\bf y}^{\ell-1},\bar  \p_{\ell-1}^{s' \rightarrow m}) \P(\bar \p_{\ell-1}^{s' \rightarrow m} | {\bf y}^{\ell-1}) \nonumber \\
	&= \sum \limits_{s'} \P(\b_\ell^{m \rightarrow r} | \bar  \p_{\ell-1}^{s' \rightarrow m}) \P(\bar \p_{\ell-1}^{s' \rightarrow m} | {\bf y}^{\ell-1}),
\end{align}
where the last equality follows from the Markov property $\P(\b_\ell^{m \rightarrow r} | {\bf y}^{\ell-1},\bar  \p_{\ell-1}^{s' \rightarrow m})  = \P(\b_\ell^{m \rightarrow r} | \bar  \p_{\ell-1}^{s' \rightarrow m})$.
Thus, \eqref{eqn:f_denom} becomes 
\begin{align}
	\mathrm{f}(y_\ell | {\bf y}^{\ell-1}) = & \sum \limits_{(m,r) \in \cal{T_\ell}} \mathrm{f}(y_\ell | \b_\ell^{m \rightarrow r}) \label{eqn:f_denom_states} \\
	& \times \sum \limits_{s'} \P(\b_\ell^{m \rightarrow r} | \bar  \p_{\ell-1}^{s' \rightarrow m}) \P(\bar \p_{\ell-1}^{s' \rightarrow m} | {\bf y}^{\ell-1}) .	\nonumber
\end{align}

The term $\P(\b_\ell^{m \rightarrow r} | \bar  \p_{\ell-1}^{s' \rightarrow m})$ is the probability that the branch from state $m$ to state $r$ is correct, given that one of the paths that started at state $s'$ and arrived at state $m$ at time $\ell-1$ is correct. 
Recall that $K = L - \nu$.   $\P(\b_\ell^{m \rightarrow r} | \bar  \p_{\ell-1}^{s' \rightarrow m})=q^{-k}$ for $1 \leq \ell \leq K$ (i.e., all $\ell$ except for the last $\nu$ trellis segments).  This is because there are $q^k$ equiprobable next states for these values of $\ell$. 

Using the notation $r \Rightarrow s'$ to indicate there is a valid path from state $r$ at time $\ell$ to state $s'$ at time $L$, we define the following indicator function $\mathrm{I}(\b_\ell^{m \rightarrow r}, s')$, which indicates that the trellis branch from state $m$ to state $r$ at trellis stage $\ell$ is a branch in a possible trellis path that terminates at $s'$:
%
%
\begin{align}
	\mathrm{I}(\b_\ell^{m \rightarrow r}, s')  &= \left \{ \begin{array}{lrll}
	1, & 1 \leq \ell \leq K, & (m,r) \in {\cal T}_\ell, &\\
	1, & K+1 \leq \ell  \leq L, & (m,r) \in {\cal T}_\ell, & r \Rightarrow s'  \\
	0, & K+1 \leq \ell  \leq L, & (m,r) \in {\cal T}_\ell, & r \not\Rightarrow s'  \\
	0, & & (m,r) \not \in {\cal T}_\ell, & \end{array} \right .	\label{eqn:branch_inds}
\end{align}
%
The branch-correct probabilities can now be written as
\begin{align}
	\P(\b_\ell^{m \rightarrow r} | \bar  \p_{\ell-1}^{s' \rightarrow m})  &=  \begin{cases}
	\I(\b_\ell^{m \rightarrow r}, s') ~q^{-k} , & 1 \leq \ell \leq K  \\
	\I(\b_\ell^{m \rightarrow r}, s'), & K+1 \leq \ell  \leq L \end{cases}  \\
	\P(\b_\ell^{m \rightarrow r} | \p_{\ell-1}^{\hat s \rightarrow m})  &=  \begin{cases}
	\I(\b_\ell^{m \rightarrow r}, \hat s) ~q^{-k} , & 1 \leq \ell \leq K  \\
	\I(\b_\ell^{m \rightarrow r}, \hat s), & K+1 \leq \ell  \leq L ~. \end{cases} 
\end{align}
%

We now define the following normalization term for the $\ell$th trellis segment using the above indicators:
\begin{align}
	\Delta_\ell &= \sum \limits_{(m,r) \in \cal{T_\ell}} \f(y_\ell | \b_\ell^{m \rightarrow r})  \sum \limits_{s'} \I(\b_\ell^{m \rightarrow r}, s') \P(\bar \p_{\ell-1}^{s' \rightarrow m} | {\bf y}^{\ell-1}) \label{eqn:delta_ell} \\
	&= \begin{cases}
	\f(y_\ell | {\bf y}^{\ell-1}) ~q^{k} , & 1 \leq \ell \leq  K \\
	\f(y_\ell | {\bf y}^{\ell-1}), & K+1 \leq \ell  \leq L ~. \end{cases}  \label{eqn:delta_ell_f} 
\end{align}
%
The $\Delta_\ell$ normalization term includes most of \eqref{eqn:f_denom_states} but excludes  the potential $q^{-k}$ in $\P(\b_\ell^{m \rightarrow r} | \bar  \p_{\ell-1}^{s' \rightarrow m})$ because it cancels with $\P(\b_\ell^{i \rightarrow j} | \p_{\ell-1}^{\hat s \rightarrow i})$ in the numerator of \eqref{eqn:P_surv}. (Either both have $q^{-k}$ or both are $1$, depending only on $\ell$.)
Substituting \eqref{eqn:delta_ell_f} into \eqref{eqn:P_surv}, we have
\begin{align}
	\P(\p_\ell^{\hat s \rightarrow j} | {\bf y}^\ell) &= \frac{1}{\Delta_\ell} \f(y_\ell | \b_\ell^{i \rightarrow j}) \P(\p_{\ell-1}^{\hat s \rightarrow i} | {\bf y}^{\ell-1}).	\label{eqn:P_surv_delta}
\end{align}
Thus, for the $\ell$th trellis segment, \eqref{eqn:P_surv_delta} expresses the candidate path-correct probability in terms of the candidate path-correct probability in the previous segment.

%

The corresponding expression for the overall path probabilities $\P(\bar \p_\ell^{s \rightarrow r} | {\bf y}^\ell)$ involves more terms. Instead of tracing a single candidate path through the trellis, we must add the probabilities of all the valid tail-biting paths incident on state $r$ in segment $\ell$ as follows:
\begin{align}
	\P(\bar \p_\ell^{s \rightarrow r} | {\bf y}^\ell) = \frac{1}{\Delta_\ell} \sum \limits_{m :(m,r) \in \mathcal{T}_\ell} & \f(y_\ell | \b_\ell^{m \rightarrow r})   \label{eqn:P_nonsurv}	\\
	\times& \I(\b^{m \rightarrow r}_\ell, s ) \P(\bar \p_{\ell - 1}^{s \rightarrow m} | {\bf y}^{\ell-1}). 	\nonumber 
\end{align}
The summation above is over the $q^k$ incoming branches to state $r$. In the special case of a rate-$\frac{1}{n}$ binary code ($q$=$2$), there are $2$ incoming branches, which we will label as $(m,r)$ and $(u,r)$, so \eqref{eqn:P_nonsurv} becomes
\begin{align}
	\P(\bar \p_\ell ^{s \rightarrow r} | {\bf y}^\ell) =& \frac{1}{\Delta_\ell} \big[ \f(y_\ell | \b_\ell^{m \rightarrow r}) \I(\b_\ell^{m \rightarrow r}, s ) \P(\bar \p_{\ell-1} ^{s \rightarrow m} | {\bf y}^{\ell-1}) \nonumber \\
	& + \f(y_\ell | \b_\ell^{u \rightarrow r}) \I(\b_\ell^{u \rightarrow r} ,s ) \P(\bar \p_{\ell-1} ^{s \rightarrow u} | {\bf y}^{\ell-1}) \big].	\label{eqn:P_nonsurv_k1}
\end{align}
Fig.~\ref{fig:trellis_diagram} illustrates how the paths from starting state $s$ merge into state $r$ at trellis segment $\ell=7$.

\subsection{{\tt PRC} Algorithm Summary}

The path probabilities are initialized as follows:
\begin{itemize}
	\item $\P(\p_0^{\hat s \rightarrow j} | {\bf y}^0) = \P(\hat s) = q^{-\nu}$ if $\hat s = j$, or $0$ otherwise. 
	\item $\P(\bar \p_0^{s \rightarrow r} | {\bf y}^0) = \P(s) = q^{-\nu}$ if $s = r$, or $0$ otherwise. 
\end{itemize}

In each trellis-segment $\ell$ ($1 \leq \ell \leq L$), do the following:
\begin{enumerate}
	\item For each branch $(m,r) \in {\cal T}_\ell$, compute the conditional p.d.f. $\mathrm{f}(y_\ell | \b_\ell^{m \rightarrow r})$. 
	\item For each branch $(m,r) \in {\cal T}_\ell$ and each starting state $s$, compute the branch-valid indicator $\I(\b_\ell^{m \rightarrow r} , s)$, as in \eqref{eqn:branch_inds}.
	\item Using the above values, compute the normalization constant $\Delta_\ell$, as in \eqref{eqn:delta_ell}.
	\item For the current state $j$ of the candidate path, compute the candidate path-correct probability $\P(\p_\ell^{\hat s \rightarrow j} | {\bf y}^\ell)$, as in \eqref{eqn:P_surv_delta}.
	\item For each starting state $s$ and each state $r$, compute the overall path-correct probabilities $\P(\bar \p_\ell^{s \rightarrow r} | {\bf y}^\ell)$, as in \eqref{eqn:P_nonsurv}.
\end{enumerate}

After processing all $L$ stages of the trellis, the following meaningful quantities emerge:
\begin{itemize}
	\item The posterior probability that the tail-biting candidate path from $\hat s$ to $\hat s$ is correct is $\P(\p_L^{\hat s \rightarrow \hat s} | {\bf y}^L) = \P(\hat {\bf x}^L | {\bf y}^L)$, which is the probability that the decoded word is correct, given the received sequence.
	\item The posterior word-error probability is then $1~-~\P(\p_L^{\hat s \rightarrow \hat s} | {\bf y}^L)  = 1 - \P(\hat {\bf x}^L | {\bf y}^L)$.
	\item The posterior probability that any of the tail-biting paths (any of the codewords) from $s$ to $s$ is correct is $\P(\bar \p_L^{s \rightarrow s} | {\bf y}^L) = \P(s | {\bf y}^L)$, which is the state reliability desired for \eqref{eqn:P_x_y_weighted}.
\end{itemize}

Numerical results of {\tt PRC} are shown in Fig.~\ref{fig:TB_ROVA_AWGN} in Sec.~\ref{sec:simulation_results}.

%
%
\section{The Tail-biting State-estimation Algorithm}
\label{sec:state_estimation}

The Post-decoding Reliability Computation described above relies on a separate decoder to identify the candidate path.  If, on the other hand, we would like to compute the word-error probability of a tail-biting code without first having determined a candidate path and starting state, we may use the following Tail-Biting State-Estimation Algorithm ({\tt TB SEA}). ~{\tt TB SEA} computes the MAP starting state $\hat s = \arg \max \limits_{s'} \P(s'| {\bf y}^L)$, along with its reliability $\P(\hat s | {\bf y}^L)$. ~{\tt ROVA}$(\hat s)$ can then be used to determine the MAP codeword $\hat {\bf x}_{\hat s}$ corresponding to starting state $\hat s$, as illustrated in Fig.~\ref{fig:tb_sea_block}.

\begin{figure}[tbh]
\centering \def\svgwidth{290pt}
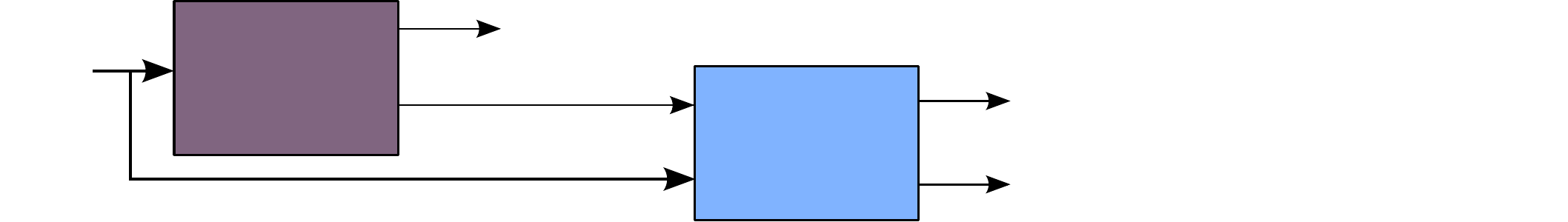
\caption{Block diagram of the Tail-Biting State-Estimation Algorithm ({\tt TB SEA}), followed by {\tt ROVA}$(\hat s)$ for the ML starting state $\hat s$.}
\label{fig:tb_sea_block}
\end{figure} 

{\tt PRC} relied on tracing a single candidate path through the trellis and computing the candidate path-correct probability, as in \eqref{eqn:P_surv_delta}. However, the overall path-correct probabilities in \eqref{eqn:P_nonsurv} do not rely on the candidate path or its probability. The proposed {\tt TB SEA} aggregates all the previous-segment path-correct probabilities $\P(\bar \p_{\ell -1}^{s \rightarrow m} | {\bf y}^{\ell -1})$ as in \eqref{eqn:P_nonsurv}, without regard to a candidate path. As a result, {\tt TB SEA} replaces the traditional add-compare-select operations of the Viterbi algorithm with the addition of all the path probabilities merging into a state that emanate from a particular origin state. Once the entire trellis has been processed, the state reliabilities are compared and the MAP starting state is selected.

\subsection{{\tt TB SEA} Algorithm Summary}
\label{sec:tb_sea}

The path probabilities are initialized as follows:
\begin{itemize}
	\item $\P(\bar \p_0^{s \rightarrow r} | {\bf y}^0) = \P(s) = q^{-\nu}$ if $s = r$, or $0$ otherwise. 
\end{itemize}

In each trellis-segment $\ell$ ($1 \leq \ell \leq L$), do the following:
\begin{enumerate}
	\item For each branch $(m,r) \in {\cal T}_\ell$, compute the conditional p.d.f. $\mathrm{f}(y_\ell | \b_\ell^{m \rightarrow r})$. 
	\item For each branch $(m,r) \in {\cal T}_\ell$ and each starting state $s$, compute the branch-valid indicator $\I(\b_\ell^{m \rightarrow r} , s)$, as in \eqref{eqn:branch_inds}.
	\item Using the above values, compute the normalization constant $\Delta_\ell$, as in \eqref{eqn:delta_ell}.
	\item For each starting state $s$ and each state $r$, compute the overall path-correct probabilities $\P(\bar \p_\ell^{s \rightarrow r} | {\bf y}^\ell)$, as in \eqref{eqn:P_nonsurv}.
\end{enumerate}

After processing all $L$ stages of the trellis, the following meaningful quantity emerges:
\begin{itemize}
	\item The posterior probability that any of the tail-biting paths (any of the codewords) from $s$ to $s$ is correct is $\P(\bar \p_L^{s \rightarrow s}  | {\bf y}^L) = \P(s | {\bf y}^L)$.
\end{itemize}
{\tt TB SEA} selects the starting state with the maximum value of $\P(s | {\bf y}^L)$ (the MAP choice of starting state), yielding $\hat s$ and its reliability $\P(\hat s| {\bf y}^L)$. Thus, {\tt TB SEA} has selected the MAP starting state without explicitly evaluating all possible codewords (i.e., paths through the trellis). This result is not limited to error control coding; it can be applied in any context to efficiently compute the MAP starting state of a tail-biting, finite-state Markov process. 

\subsection{\tt{TB SEA + ROVA}$(\hat s)$}

After finding the MAP starting state $\hat s$ with {\tt TB SEA}, {\tt ROVA}$(\hat s)$ may be used to compute the MAP codeword $\hat {\bf x}_{\hat s}^L$ and $\P(\hat {\bf x}_{\hat s}^L | {\bf y}^L, \hat s)$. We have used the subscript $\hat s$ to indicate that $\hat {\bf x}_{\hat s}^L$ is the MAP codeword for the terminated code starting and ending in $\hat s$. The overall reliability $\P(\hat {\bf x}_{\hat s}^L | {\bf y}^L)$ can then be computed as in \eqref{eqn:P_x_y_weighted}, which we have replicated below to show how the {\tt TB SEA} and {\tt ROVA}$(\hat s)$ provide the needed factors:

\begin{align}
	\P(\bf \hat x|y) &= \underbrace{\P(\hat {\bf x}|{\bf y},\hat s)}_\text{computed by {\tt ROVA$(\hat s)$}} \times \underbrace{\P(\hat s|y).}_\text{computed by {\tt TB SEA}}
	\label{eqn:P_x_y_TBSEA}
\end{align}

%
%
\subsection{MAP States vs. MAP Codewords}

Is it possible that the maximum a posteriori codeword $\bf \hat x$ corresponds to a starting state other than the MAP state $\hat s$? The following theorem proves that the answer is no, given a suitable probability of error.

\begin{theorem}
	The MAP codeword $\bf \hat x$ for a tail-biting convolutional code begins and ends in the MAP state $\hat s$, as long as $\P({\bf \hat x | y}) > \frac{1}{2}$. \label{thm:MLstates}
\end{theorem}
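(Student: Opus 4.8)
The plan is to exploit the fact that the posterior probability of a starting state decomposes as a sum of the posterior probabilities of the codewords sharing that state, and then to use the hypothesis $\P(\hat{\mathbf x}\mid\mathbf y) > \tfrac12$ to show that no competing state can accumulate enough mass to overtake the state of the MAP codeword. Since a tail-biting codeword begins and ends in the same state, it suffices to identify its starting state. First I would name the (a priori unknown) starting state of the MAP codeword $\hat{\mathbf x}$, say $s^\ast$; the goal is then precisely to prove $s^\ast = \hat s$, where $\hat s = \arg\max_{s}\P(s\mid\mathbf y)$ is the MAP state of Sec.~\ref{sec:state_estimation}.

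Next I would record the key decomposition. Because (under the one-to-one input-to-codeword mapping assumed in Sec.~\ref{sec:intro}) each tail-biting codeword corresponds to exactly one starting state, the events $\{\mathbf X = \mathbf x\}$ over codewords $\mathbf x$ beginning at $s$ are disjoint and their union is the event $\{$starting state $=s\}$. Hence
\begin{align}
	\P(s\mid\mathbf y) = \sum_{\mathbf x \,:\, \text{starts at } s} \P(\mathbf x\mid\mathbf y),
\end{align}
and summing over all $s$ gives the normalization $\sum_{\mathbf x}\P(\mathbf x\mid\mathbf y)=1$.

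Then I would bound both sides. Since $\hat{\mathbf x}$ is one of the codewords starting at $s^\ast$,
\begin{align}
	\P(s^\ast\mid\mathbf y) \ge \P(\hat{\mathbf x}\mid\mathbf y) > \tfrac12.
\end{align}
For any other state $s \neq s^\ast$, every codeword starting at $s$ differs from $\hat{\mathbf x}$, so by the normalization above,
\begin{align}
	\P(s\mid\mathbf y) = \sum_{\mathbf x \,:\, \text{starts at } s}\P(\mathbf x\mid\mathbf y) \le \sum_{\mathbf x \neq \hat{\mathbf x}}\P(\mathbf x\mid\mathbf y) = 1 - \P(\hat{\mathbf x}\mid\mathbf y) < \tfrac12.
\end{align}
Combining the last two displays yields $\P(s\mid\mathbf y) < \tfrac12 < \P(s^\ast\mid\mathbf y)$ for every $s \neq s^\ast$, so $s^\ast$ is the unique maximizer of $\P(\cdot\mid\mathbf y)$; that is, $\hat s = s^\ast$, which is the claim.

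I do not expect a serious obstacle: once the state-posterior decomposition is in place, the result is a short normalization estimate. The only points needing care are confirming that distinct codewords induce disjoint events and that each codeword has a single well-defined starting state, both of which hold under the one-to-one mapping assumed for the encoders considered. The role of $\P(\hat{\mathbf x}\mid\mathbf y) > \tfrac12$ is exactly to push the MAP codeword's probability above the combined probability of all competitors, which is what pins the MAP state to $s^\ast$; without this threshold two states could each collect mass near $\tfrac12$ and the conclusion could fail.
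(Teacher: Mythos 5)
Your proof is correct and follows essentially the same route as the paper: both arguments reduce to showing $\P(s^\ast \mid \mathbf y) \ge \P(\hat{\mathbf x} \mid \mathbf y) > \tfrac12$ and then invoking normalization of the state posteriors to force every other state below $\tfrac12$. The only cosmetic difference is that you obtain the first inequality from the sum decomposition $\P(s \mid \mathbf y) = \sum_{\mathbf x \,:\, \text{starts at } s} \P(\mathbf x \mid \mathbf y)$, whereas the paper reads it off the factorization $\P(\hat{\mathbf x} \mid \mathbf y) = \P(\hat{\mathbf x} \mid \mathbf y, s_{\hat{\mathbf x}})\, \P(s_{\hat{\mathbf x}} \mid \mathbf y)$ of \eqref{eqn:P_x_y_weighted}; these are equivalent one-line observations, and your version simply makes the disjointness and normalization steps explicit.
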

\begin{proof}
	Consider a codeword $\bf \hat x$ with  $\P({\bf \hat x | y}) >  \frac{1}{2}$. By \eqref{eqn:P_x_y_weighted}, $\P({\bf \hat x | y}) = \P({\bf \hat x | y}, s_{\bf \hat x}) \P(s_{\bf \hat x} | \bf y)$, where $s_{\bf \hat x}$ is the starting state of $\bf \hat x$. This implies that $\P(s_{\bf \hat x} | {\bf y}) > \frac{1}{2}$. The MAP state is $\arg \max \limits_{s'} \P(s' | \bf y)$, which must be $s_{\bf \hat x}$, since all other states $s'$ must have  $\P(s' | {\bf y}) < \frac{1}{2}$.
\end{proof}

Theorem~\ref{thm:MLstates} shows that the application of {\tt TB SEA} followed by the Viterbi algorithm (or the ROVA) will always yield the MAP codeword $\bf \hat x$ of the tail-biting code, not just the MAP codeword for the terminated code starting in $\hat s$ (as long as the probability of error is less than $\frac{1}{2}$). In most practical scenarios, the word-error probability ($1 -  \P(\bf \hat x | y)$),  even if unknown exactly, is much less than $\frac{1}{2}$, so the theorem holds. As a result, in these cases {\tt TB SEA} selects the same codeword $\bf \hat x$ as would the {\tt TB ROVA} of Sec.~\ref{sec:straightforward_algo}, and computes the same reliability $\P(\bf \hat x|y)$.

\begin{center}
\begin{table*}[bp]
{\small
\hfill{}
\begin{center}
  \caption{Complexity per trellis segment of the proposed algorithms (disregarding branch metric computations).} 
\begin{tabular}{ l | c | c | c | c | c | c }
 Algorithm & \rot{Path metrics} & \rot{Cand. prob.} & \rotatebox{45}{Overall prob.} & \rotatebox{45}{Additions} & \rotatebox{45}{Multiplications} & \rotatebox{45}{Divisions} \\
  \hline 
    \multicolumn{7}{c}{Key modules of decoders}  \\
  \hline
 {\tt VA}$(s)$ & $q^\nu$ & $0$ & $0$ & $0$ & $q^{\nu + k}$ & $0$ \\
 {\tt ROVA}$(s)$ \cite{Raghavan_ROVA_TransIT_1998} & $q^\nu$ & $q^\nu$ & $q^\nu$ & $2q^{\nu}(2 q^k - 1) - 1$ & $3q^{\nu + k}$ & $2q^{\nu}$ \\
 {\tt PRC} & $0$ & $1$ & $q^{2 \nu}$ &  $q^{2 \nu}(2 q^k - 1) - 1$ & $q^{2 \nu + k}$ & $q^{2 \nu} + 1$ \\
 \tt{TB SEA} & $0$ & $0$ & $q^{2 \nu}$ &  $q^{2 \nu}(2 q^k - 1) - 1$ & $q^{2 \nu + k}$ & $q^{2 \nu}$ \\
 {\tt Approx ROVA}$(s)$ \cite{Fricke_Approx_ROVA_VTC_2007} & $q^\nu$ & $q^\nu$ & $0$ & $q^{\nu}(q^k-1)$ & $q^{\nu + k} + 1$ & $q^{\nu}$ \\
  \hline
    \multicolumn{7}{c}{Tail-biting decoders that provide reliability output}  \\
  \hline
 \tt{TB ROVA} & $q^{2 \nu}$ & $q^{2 \nu}$  & $q^{2 \nu}$  &  $2q^{2 \nu}(2 q^k - 1) - q^\nu$ & $3q^{2 \nu + k}$ & $2q^{2 \nu}$ \\
 \tt{TB SEA + ROVA$(\hat s)$} & $q^\nu$ & $q^\nu$ & $q^{2 \nu} + q^\nu$ & $(q^{2 \nu} + 2q^\nu)(2q^k - 1) - 2$ & $q^{2 \nu + k} + 3 q^{\nu+k}$ & $q^{2 \nu} + 2q^\nu$ \\
 \tt{Approx TB ROVA} & $q^{2 \nu}$ & $q^{2 \nu}$ & $0$ & $q^{2 \nu}(q^k-1)$ & $q^{2 \nu + k} + q^\nu$ & $q^{2 \nu}$ \\
\end{tabular}
\label{tbl:complexity}
\end{center}}
\hfill{}
\end{table*}
\end{center}

%
%
\subsection{The TB BCJR Algorithm for State Estimation}
\label{sec:tbbcjr}

While several related papers such as \cite{Handlery_Boosting_TCOM_2003} and \cite{Yu_State_Est_VTC_2008} have proposed ways to estimate the starting state of a tail-biting decoder, none computes exactly the posterior probability of the starting state, $\P(s| {\bf y}^L)$, as described for {\tt TB SEA}. Upon a first inspection, the tail-biting BCJR (TB BCJR) of \cite{Anderson_TB_MAP_JSAC_1998} appears to provide a similar method of computing this probability. Applying the forward recursion of the BCJR algorithm provides posterior probabilities that are denoted as $\alpha_L(s) = \P(s| {\bf y}^L)$ in \cite{Anderson_TB_MAP_JSAC_1998}. Thus, it would appear that the state-estimation algorithm of Sec.~\ref{sec:tb_sea} can be replaced by a portion of the TB BCJR algorithm. This would yield a significant decrease in computational complexity, from roughly $q^{2 \nu}$ operations per trellis segment for {\tt TB SEA} to $q^\nu$ for the TB BCJR. However, as will be shown in Sec.~\ref{sec:simulation_results}, the word-error performance of the tail-biting BCJR when used in this manner is significantly inferior to that of {\tt TB SEA}.

As noted in \cite{Anderson_TB_BCJR_book_2001} and \cite{Anderson_BD_CVA_TCOM_2002}, the tail-biting BCJR algorithm is an approximate symbol-by-symbol MAP decoder. It is approximate in the sense that the forward recursion of the TB BCJR in \cite{Anderson_TB_MAP_JSAC_1998} does not strictly enforce the tail-biting restriction, allowing non-tail-biting ``pseudocodewords'' to appear and cause errors. \cite{Anderson_TB_MAP_JSAC_1998} requires the probability distributions of the starting and ending states to be the same, which is a weaker condition than requiring all codewords to start and end in the same state. \cite{Anderson_TB_BCJR_book_2001} and \cite{Anderson_BD_CVA_TCOM_2002} have shown that when the tail-biting length $L$ is large relative to the memory length $\nu$, the suboptimality of the TB BCJR in terms of the bit-error rate is small. However, we are concerned with word-error performance in this paper. We find that when the TB BCJR is used to estimate the initial state $\hat s$ and its probability $\P(\hat s| {\bf y}^L)$, followed by {\tt ROVA}$(\hat s)$ for the most likely state $\hat s$, the impact on word error is severe (Fig.~\ref{fig:TB_ROVA_AWGN}). Frequent state-estimation errors prevent the Viterbi algorithm in the second phase from decoding to the correct codeword. Thus, the approximate tail-biting BCJR of \cite{Anderson_TB_MAP_JSAC_1998} is not effective as a replacement for {\tt TB SEA} when using the word-error criterion.

In contrast, the exact symbol-by-symbol MAP decoder for tail-biting codes in \cite[Ch. 7]{Johannesson_Fundamentals_Conv_1999} does enforce the tail-biting restriction, and has complexity on the same order as that of {\tt TB SEA}. However, because the symbol-by-symbol MAP decoder selects the most probable input symbols while {\tt TB SEA + ROVA}$(\hat s)$ selects the most probable input sequence, {\tt TB SEA + ROVA}$(\hat s)$ is recommended for use in retransmission schemes that depend on the word-error probability.

%
%
\section{Complexity Analysis}
\label{sec:complexity}

Table~\ref{tbl:complexity} compares the complexity per trellis segment of each of the discussed algorithms, assuming that the conditional p.d.f. $\mathrm{f}(y_\ell | \b_\ell^{m \rightarrow r})$ has already been computed for every branch in the trellis.
The columns labeled  `Path metrics', `Cand. prob.', and `Overall prob.' refer to the number of quantities that must be computed and stored in every trellis segment, for the path metrics of the Viterbi algorithm, the candidate path probability of \eqref{eqn:P_surv_delta}, and the overall path probability of \eqref{eqn:P_nonsurv}, respectively. The number of operations per trellis segment required to compute these values is listed in the columns labeled `Additions', `Multiplications', and `Divisions'.

The {\tt ROVA($s$)} row of Table~\ref{tbl:complexity} corresponds to Raghavan and Baum's ROVA \cite{Raghavan_ROVA_TransIT_1998} for a terminated code starting and ending in state $s$. The operations listed include the multiplications required for the path metric computations of the Viterbi algorithm for state $s$, {\tt VA}$(s)$.
The {\tt TB ROVA} row represents performing the ROVA for each of the $q^\nu$ possible starting states as described in Sec.~\ref{sec:straightforward_algo}, so each of the quantities is multiplied by $q^\nu$.

The {\tt PRC} row corresponds to the proposed Post-decoding Reliability Computation of Sec.~\ref{sec:post-proc_rova}. The complexity incurred to determine the candidate path (e.g., by the BVA or the A* algorithm) is not included in this row and must also be accounted for, which is why no path metrics are listed for {\tt PRC}. 
Compared to {\tt TB ROVA}, due to combining computations into a single pass through the trellis, the complexity of {\tt PRC} is reduced by approximately a factor of 2. This is because {\tt TB ROVA} calculates a candidate path probability for each of the $q^\nu$ starting states (due to decoding to the ML codeword each time), whereas the combined trellis-processing of {\tt PRC} involves only one candidate path. Both algorithms compute $q^\nu$ overall path probabilities, so the ratio of complexity is roughly $\frac{1 + q^\nu}{q^\nu + q^\nu} \approx \frac{1}{2}$. 

The reduction in complexity of {\tt TB SEA} compared to {\tt PRC} is modest, with slightly fewer multiplications and divisions required due to the absence of the candidate path calculations in {\tt TB SEA}.
Importantly, performing {\tt TB SEA} followed by {\tt ROVA}$(\hat s)$ for the ML state $\hat s$ is shown to be an improvement over {\tt TB ROVA} for moderately large $\nu$. ~{\tt TB SEA + ROVA}$(\hat s)$ requires approximately one half the additions, one third the multiplications, and one half the divisions of {\tt TB ROVA}.
{\tt TB SEA}'s complexity reduction is partly due to the fact that it does not require the add-compare-select operations of the Viterbi algorithm, which {\tt TB ROVA} performs for each starting state. 
Note also that the number of trellis segments processed in {\tt TB SEA} is constant ($L$ segments), whereas the number of trellis segments processed by many tail-biting decoders (e.g., the BVA) depends on the SNR.

Lastly, the computational costs of performing Fricke and Hoeher's simplified ROVA \cite{Fricke_Approx_ROVA_VTC_2007} are listed in the {\tt Approx ROVA}$(s)$ row, along with the tail-biting approximate version of Sec.~\ref{sec:simplified} ({\tt Approx TB ROVA}). In both of these cases, the word-error outputs are estimates. In contrast, {\tt TB ROVA} and {\tt TB SEA + ROVA}$(\hat s)$ compute the exact word-error probability of the received word. 

For the special case of rate-$1/n$ binary convolutional codes with $\nu = 6$ memory elements, Fig.~\ref{fig:complexity} gives an example of the number of additions, multiplications, and divisions that must be performed per trellis segment for the three tail-biting decoders in Table~\ref{tbl:complexity}. ~{\tt TB SEA + ROVA}$(\hat s)$ is competitive with {\tt Approx TB ROVA} in terms of the number of multiplications and divisions that must be performed, but {\tt Approx TB ROVA} requires fewer additions than {\tt TB SEA + ROVA}$(\hat s)$.
%

\begin{figure}
  \centering
    	\scalebox{0.47}{\includegraphics{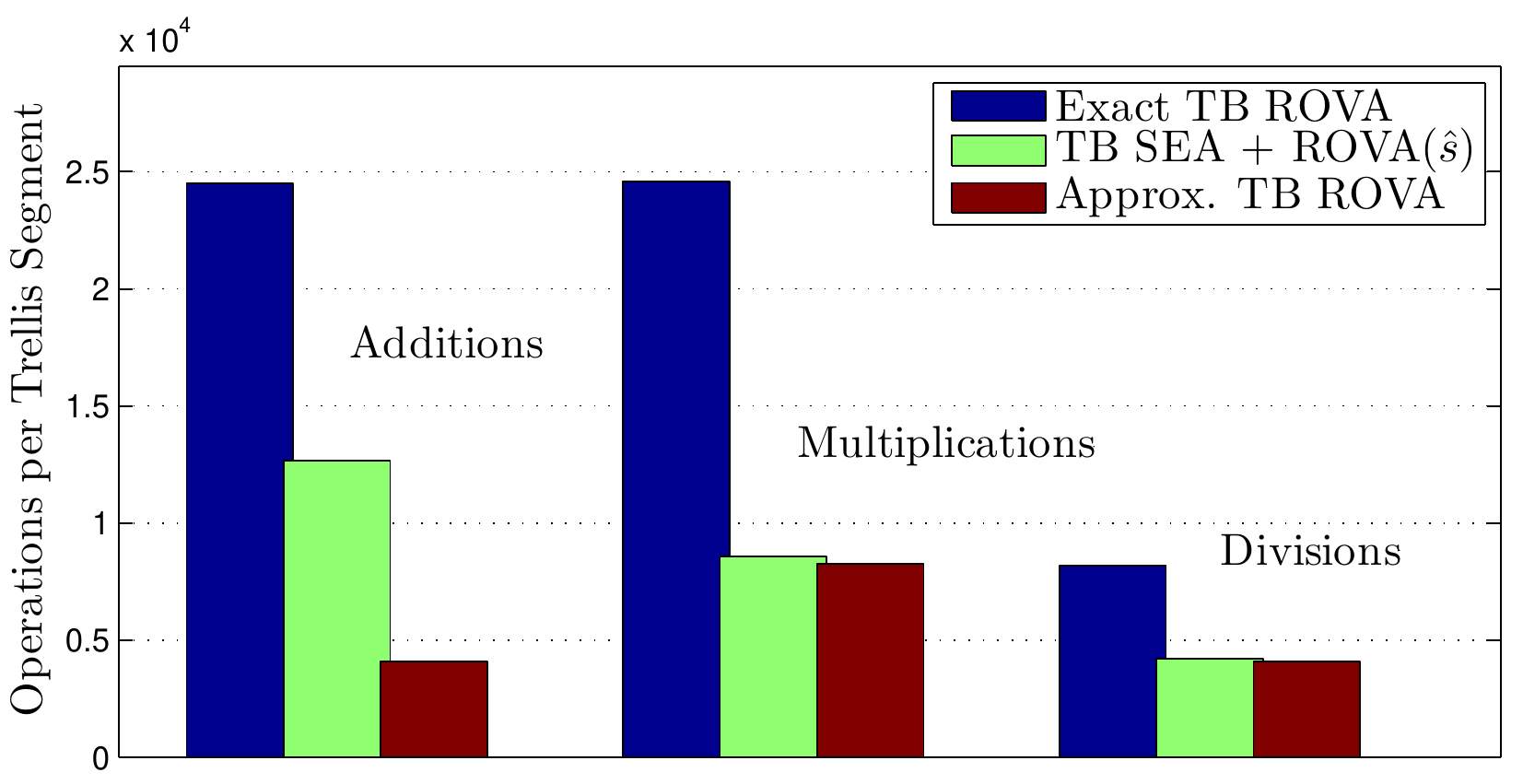}}
    \hspace{0.1in}
\caption{Examples of the computations per trellis segment for the tail-biting decoders listed in Table~\ref{tbl:complexity} corresponding to rate-$1/n$ binary convolutional codes, for $\nu = 6$ memory elements.}
\label{fig:complexity}
\end{figure}

%
%
\section{Numerical Results}
\label{sec:simulation_results}

\subsection{Convolutional Code}

\begin{table}
\begin{center}
  \caption{Generator polynomials $g_1$, $g_2$, and $g_3$ corresponding to the simulated rate-$1$/$3$ tail-biting convolutional code. $d_{free}$ is the free distance, $A_{d_{free}}$ is the number of nearest neighbors with weight $d_{free}$, and $L_D$ is the analytic traceback depth.} 
\begin{tabular}{ c | c | c | c | c | c | c | c }
  $\nu$ & $2^{\nu}$ & $g_1$ & $g_2$ & $g_3$  & $d_{free}$ & $A_{d_{free}}$ & $L_D$ \\
  \hline \hline
  6 & 64 & 117 & 127 & 155 & 15 & 3 & 21 \\
  \hline
\end{tabular}
\label{tbl:conv_codes}
\end{center}
\end{table}

Table \ref{tbl:conv_codes} lists the rate-$1$/$3$, binary convolutional encoder polynomials from Lin and Costello \cite[Table 12.1]{Lin_2004_ECC} used in the simulations. The number of memory elements is $\nu$, $2^\nu$ is the number of states, and $\{g_1,g_2,g_3\}$ are the generator polynomials in octal notation. The code selected has the optimum free distance $d_{free}$, which is listed along with the analytic traceback depth $L_D$ \cite{Anderson_Traceback_TransIT_1989}. $A_{d_{free}}$ is the number of nearest neighbors with weight $d_{free}$. The simulations in this section use a feedforward encoder realization of the generator polynomial.

%
\subsection{Additive White Gaussian Noise (AWGN) Channel}

For antipodal signaling (i.e., BPSK) over the Gaussian channel, the conditional density $\mathrm{f}(y_\ell | \b_\ell^{i \rightarrow j})$ can be expressed as

\begin{align}
	\mathrm{f}(y_\ell | \b_\ell^{i \rightarrow j}) = \prod \limits_{m=1}^n \frac{1}{\sqrt{2 \pi \sigma^2}} \exp \left \{ - \frac{[y_\ell(m) - x_\ell(m) ]^2 }{2 \sigma^2} \right \},
\end{align}
where $y_\ell(m)$ is the $m$th received BPSK symbol in trellis-segment $\ell$, $x_\ell(m)$ is the $m$th output symbol of the encoder branch from state $i$ to state $j$ in trellis segment $\ell$, and $\sigma^2$ is the noise variance. For a transmitter power constraint $P$, the encoder output symbols are $x_\ell(m) \in \{ +\sqrt{P},-\sqrt{P} \}$ and the energy per bit is $E_b = P \frac{n}{k}$. This yields an SNR equal to $P/\sigma^2 = 2 \frac{k}{n} \frac{E_b}{N_0}$ when the noise variance is $\sigma^2~=~N_0/2$.

\subsection{Simulation Results}

This section provides a comparison of the average word-error probability computed by the tail-biting reliability-output algorithms for the AWGN channel using the rate-$1/3$, $64$-state tail-biting convolutional code listed in Table~\ref{tbl:conv_codes}. 
The simulations in Fig.~\ref{fig:TB_ROVA_AWGN} use $L = 128$ input bits and $384$ output bits. The `Actual' curves in the figures show the fraction of codewords that are decoded incorrectly, whereas the `Computed' curves show the word-error probability computed by the receiver.
`Actual' values are only plotted for simulations with more than 100 codewords in error. 

Fig.~\ref{fig:TB_ROVA_AWGN} evaluates the performance of Sec.~\ref{sec:straightforward_algo}'s {\tt TB ROVA} and Sec.~\ref{sec:post-proc_rova}'s {\tt PRC}. In the figure, {\tt PRC} is applied to the output of the Bidirectional Viterbi Algorithm (BVA), a suboptimal tail-biting decoder. 
The `Actual' word-error performance of the suboptimal `BVA'  is slightly worse than that of the ML `Exact TB ROVA', but the difference is not visible in Fig. 7. 
However, even though the bidirectional Viterbi decoder may choose a codeword other than the ML codeword, the posterior probability $P(\hat {\bf x}^L|{\bf y}^L)$ computed by {\tt PRC} is exact. 
Thus, {\tt PRC} provides reliability information about the decoded word that the receiver can use as retransmission criteria in a hybrid ARQ setting.

Fig.~\ref{fig:TB_ROVA_AWGN} also shows the performance of the combined {\tt TB SEA + ROVA}$(\hat s)$ approach in comparison with {\tt TB ROVA}. As shown in Thm.~\ref{thm:MLstates}, the word-error probability calculated by the computationally efficient {\tt TB SEA + ROVA}$(s)$ is identical to that of {\tt TB ROVA}, except when the probability of error is extremely high (i.e., when $\P({\bf \hat x|y}) < \frac{1}{2}$). Even in the high-error regime, however, the difference is negligible.

\begin{figure}
  \centering
    	\scalebox{0.47}{\includegraphics{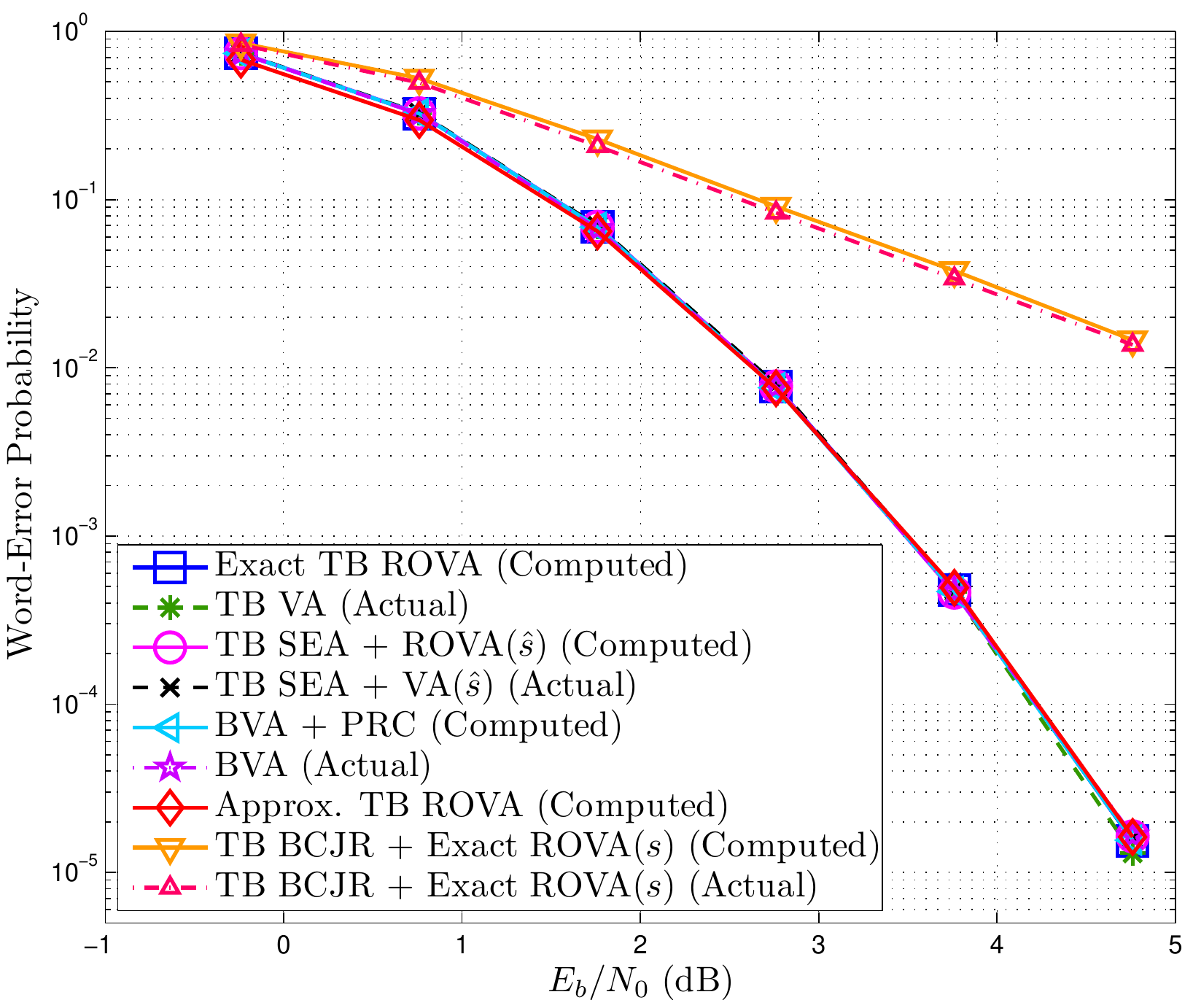}}
    \hspace{0.1in}
\caption{Computed and actual word-error probability of the exact {\tt TB ROVA}, {\tt TB SEA} followed by  {\tt ROVA}$(\hat s)$, and the Bidirectional Viterbi Algorithm (BVA) followed by the Post-decoding Reliability Computation ({\tt PRC}). Also shown are the computed word-error probability estimates for {\tt Approx TB ROVA}.
All simulations use the rate-$1/3$, $64$-state convolutional code listed in Table~\ref{tbl:conv_codes} with $L = 128$ input bits and $384$ output bits and transmission over the AWGN channel. The `Computed' values are the word-error probabilities calculated by the receiver (averaged over the simulation) and the `Actual' values count the number of words decoded incorrectly. 
The `TB BCJR' method of estimating the initial state is included for comparison, indicating that there is a severe penalty for disregarding the tail-biting restriction.  Note that all curves except for the two `TB BCJR' curves are almost indistinguishable.}
\label{fig:TB_ROVA_AWGN}
\end{figure}

The performance of the exact and approximate versions of {\tt TB ROVA} is compared in Fig.~\ref{fig:TB_ROVA_AWGN}. For each starting state $s$, the `Exact TB ROVA' uses Raghavan and Baum's ROVA \cite{Raghavan_ROVA_TransIT_1998} and the `Approx. TB ROVA' uses Fricke and Hoeher's simplified ROVA \cite{Fricke_Approx_ROVA_VTC_2007}, as described in Sec.~\ref{sec:simplified}. The approximate approach results in an estimated word-error probability that is very close to the exact word-error probability. Both reliability computations invoke the same decoder, the tail-biting Viterbi algorithm (`TB VA'), so the `Actual' curves are identical.

Finally, Fig.~\ref{fig:TB_ROVA_AWGN} also shows that when the forward recursion of the `TB BCJR' of \cite{Anderson_TB_MAP_JSAC_1998} is used to estimate the starting/ending state, there is a severe word-error penalty for disregarding the tail-biting restriction, as discussed in Sec.~\ref{sec:tbbcjr}. The `TB BCJR' simulations used one iteration through $L=128$ trellis segments. Simulations with additional loops around the circular trellis did not improve the actual word-error probability, since the tail-biting condition was not enforced. Care should be taken when estimating the starting-state probability $\P(s|{\bf y}^L)$ based on observations of ${\bf y}^L$ in multiple trellis-loops.

\begin{figure}
  \centering
    	\scalebox{0.47}{\includegraphics{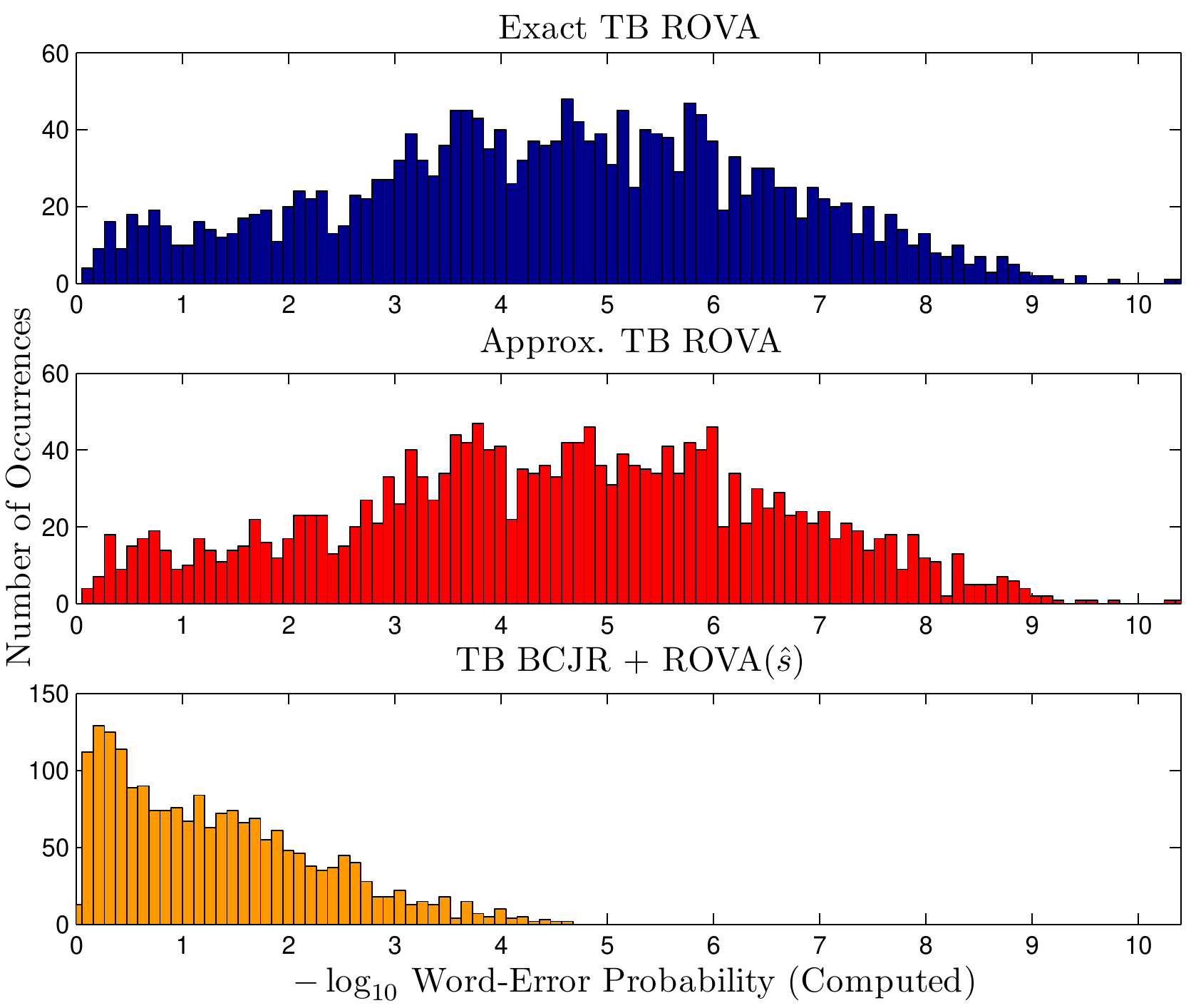}}
    \hspace{0.1in}
\caption{Histograms of the word-error probability, plotted on a logarithmic scale, computed by three reliability-output decoders: {\tt TB ROVA} with ML decoding and exact reliability computations, {\tt Approx TB ROVA} with ML decoding and approximate reliability computations, and the TB BCJR for sub-optimal estimation  of the starting state $\hat s$ followed by {\tt ROVA}$(\hat s)$. Each histogram includes simulations of the same 2000 transmitted codewords and noise realizations. The vertical axis is the number of times among the 2000 decoded words that the word-error probability falls within the histogram bin.                                                                                                                                                                                                                                                                                                                                                                                                                                                                                                                                                                                                                                   
The two ML decoders compute $\P(\hat {\bf x}^L|{\bf y}^L)$ for the same decoded word $\hat {\bf x}^L$, whereas the suboptimal BCJR-based decoder decodes to a codeword that is not necessarily the same as $\hat {\bf x}^L$.
All simulations use the rate-$1/3$, $64$-state convolutional code listed in Table~\ref{tbl:conv_codes}, with $L=32$ input bits, $96$ output bits and SNR 0 dB ($E_b/N_0 = 1.76$ dB).}
\label{fig:TB_ROVA_histogram}
\end{figure}

Fig.~\ref{fig:TB_ROVA_histogram} provides a histogram of the word-error probabilities computed by the receiver for the rate-$1/3$, $64$-state convolutional code listed in Table~\ref{tbl:conv_codes}, with $L=32$ input bits, $96$ output bits and SNR 0 dB ($E_b/N_0 = 1.76$ dB). Fig.~\ref{fig:TB_ROVA_histogram} illustrates that the exact and approximate {\tt TB ROVA} approaches give very similar word-error probabilities, whereas the word-error probabilities computed by the tail-biting BCJR followed by {\tt ROVA}$(\hat s)$ differ significantly. 
The difference in the histogram for the TB BCJR is due to poorer decoder performance.  Frequent errors in the state-estimation portion of the tail-biting BCJR cause the word-error probability to be high.

%
%
\section{Conclusion}
\label{sec:conc}

We have extended the reliability-output Viterbi algorithm to accommodate tail-biting codes, providing several tail-biting reliability-output decoders. {\tt TB ROVA} invokes Raghavan and Baum's ROVA for each possible starting state $s$, and then computes the posterior probability of the ML starting state, $\P(\hat s|{\bf y})$, in order to compute the overall word-error probability. We then demonstrated an approximate version of {\tt TB ROVA} using Fricke and Hoeher's simplified ROVA.
We introduced the Post-decoding Reliability Computation, which calculates the word-error probability of a decoded word, and the Tail-Biting State-Estimation Algorithm, which first computes the MAP starting state $\hat s$ and then decodes based on that starting state with {\tt ROVA}$(\hat s)$.  

A complexity analysis shows that {\tt TB SEA} followed by {\tt ROVA}$(\hat s)$ reduces the number of operations by approximately half compared to {\tt TB ROVA}. Importantly, Theorem 1 proved that the word-error probability computed by {\tt TB SEA + ROVA}$(\hat s)$ is the same as that computed by {\tt TB ROVA} in SNR ranges of practical interest. Because of this, {\tt TB SEA} is a suitable tail-biting decoder to use in reliability-based retransmission schemes (i.e., hybrid ARQ), being an alternative to {\tt Approx TB ROVA}.

%
%
\bibliographystyle{IEEEtran}
{\bibliography{AW_bib}}

\newpage

\end{document}